\documentclass[runningheads,a4paper]{llncs}
\usepackage[margin=1.25in]{geometry}
\usepackage{amssymb}
\setcounter{tocdepth}{3}
\usepackage{graphicx}
\usepackage{verbatim}
\usepackage{mathrsfs}
\usepackage{amsfonts}

\usepackage{amsthm}
\usepackage{ifpdf}
\usepackage{amsmath}

\newtheorem{mythm}{Theorem}
\newtheorem{mylmm}{Lemma}

\newtheorem{myclm}{Claim}
\usepackage[lined,commentsnumbered,vlined]{algorithm2e}
\usepackage[noend]{algpseudocode}
\usepackage{enumerate}
\usepackage{url}
\usepackage{color,soul}
\usepackage{hyperref}
\usepackage{cite}

\urldef{\mailsa}\path|{ibanerje, richards}@cs.gmu.edu|    
\newcommand{\keywords}[1]{\par\addvspace\baselineskip
\noindent\keywordname\enspace\ignorespaces#1}

\begin{document}

\mainmatter  

\title{Sorting Under 1-$\infty$ Cost Model}

\titlerunning{Sorting Under 1-$\infty$ Cost Model}

%
%
\author{Indranil Banerjee, Dana Richards}
\authorrunning{I  Banerjee, D Richards}

\institute{George Mason University\\ Department Of Computer Science\\ Fairfax Virginia 22030, USA\\
\mailsa}

%
%

\toctitle{Lecture Notes in Computer Science}
\tocauthor{Authors' Instructions}
\maketitle

\begin{abstract}
In this paper we study the problem of sorting under non-uniform comparison costs, where costs are either 1 or $\infty$. If comparing a pair has an associated cost of $\infty$ then we say that such a pair cannot be compared (forbidden pairs). Along with the set of elements $V$ the input to our problem is a graph $G(V, E)$, whose edges represents the pairs that we can compare incurring an unit of cost. Given a graph with $n$ vertices and $q$ forbidden edges we propose the first non-trivial deterministic algorithm which makes $O((q + n)\log{n})$ comparisons with a total complexity of $O(n^2 + q^{\omega/2})$, where $\omega$ is the exponent in the complexity of matrix multiplication. We also propose a simple randomized algorithm for the problem which makes $\widetilde{O}(n^2/\sqrt{q + n} + n\sqrt{q})$ probes with high probability. When the input graph is random we show that $\widetilde{O}(\min{(n^{3/2}, pn^2)})$ probes suffice, where $p$ is the edge probability. 
\keywords{Sorting, Random Graphs, Complexity}
\end{abstract}

\section{Introduction}
Comparison based sorting algorithms is one of the most studied area in theoretical computer science. However, the majority of the efforts have been focused on the uniform comparison cost model. Arbitrary non-uniform cost models can make trivial problems non-trivial, like finding the minimum \cite{4,8}. Thus it makes sense to consider a more structured cost. For example, a common cost model is the monotone\footnote{By monotone we mean that the cost of comparing a pair is a monotone function of the values of the pair.} cost model. As shown in \cite{8} the best one can do is to get an algorithm that is within a logarithmic factor of a cost optimal algorithm. However, the 1-$\infty$ cost model in this paper is not monotonic. This model has comparison cost of 1 or $\infty$. A pair with cost $\infty$ is considered a ``forbidden pair''. The set of pairs with comparison cost 1, defines an undirected graph, $G(V, E)$, where $V$ is the set of keys and $E$ represents the allowed comparisons. We call $G$ the comparison graph. Define $E_f$ to be the set of forbidden pairs. Let $|V| = n$ and $|E_f| = q$. 

An example of a problem that uses this model is the nuts and bolts problem. This is strictly not a sorting problem rather a matching one. In this problem one is given two sets of elements, a set of nuts and a set bolts. Elements in each set have distinct sizes and for each nut it is guaranteed that there exists a unique bolt of same size. Matching is performed by comparing a nut with a bolt. However, pairs of nuts or pairs of bolts cannot be compared. So in this case $G = K(N, B)$ is a complete bipartite graph with edges from the set of nuts $N$ to the set of bolts $B$. This problem has been solved in the mid  1990s \cite{7,9}. The existence of a $O(n \log{n})$ time deterministic algorithm was proved for it using the theory on bipartite expanders \cite{7}. 

The problem of sorting with forbidden pairs is still open for the most part. It is closely related to the problem of partial sorting under a relation determining oracle. In this model we are given a set $P$ of elements and a oracle $\mathcal{O}_r$ which is used to determine the relations between pairs of elements in $P$. The goal is to determine all the valid relations. Number of queries made to $\mathcal{O}_r$ is defined as the {\it query complexity}. Since there are $\Omega(2^{n^2/4})$\cite{16} labelled posets with $n$ elements, it immediately follows that the information theoretic bound (ITB) for the query complexity is $\Omega(n^2)$. This is has been investigated for width bounded posets in \cite{17}, where the authors show that if $P$ has width at most $w$ then the ITB for the query complexity is $O((w + \log{n})n)$. They presented a query optimal algorithm for width bounded posets whose total complexity is $O(nw^2\log{n \over w})$. Their main contributions were on developing an efficient data structure which was use to store a poset as disjoint chains and queries unknown relations using a weighted binary search method. This algorithm can be generalized for any poset with an additional $\log{w}$ factor added to the the query complexity. Their results were the first major extension in this line of research after the seminal work by Faigle and Tur$\acute{\mbox{a}}$n\cite{18} which showed existence of such an algorithm. Although an efficient implementation of it were not known at the time. Another similar problem is the {\it local sorting} problem.
In this problem $V$ is an ordered set and for each $(u, v) \in E$ we want to determine their relative order. The problem is to determine if this can be done without resorting to sorting the entire set $V$, since the ITB for this problem is $\Omega(n \log {\Delta})$ in the standard comparison tree model (where $\Delta$ is the maximum degree of $G$).  Currently no non-trivial deterministic algorithm is known for this problem. However, there is a randomized algorithm which makes optimal number of comparison with high probability \cite{20}.

The query model used in this paper differs from \cite{17} in  following manner: we don't charge for checking whether an edge exists but we only charge for the comparisons made. The number of comparisons made or rather asked to the oracle is naturally defined as the {\it comparison complexity} or the {\it probe complexity}. However, no non-trivial ITB for the probe complexity is known in the standard decision tree model. We believe that the model is too weak for this purpose. For example, given a comparison graph $G$ the number of different acyclic orientations of $G$ gives an upper bound on the number of possible answers. Given the fact that it does not take any comparisons to identify $G$ (up to isomorphism) and $G$ has at most $\le \prod_{v \in V}(d_v + 1) \le n^n$\cite{20} number of acyclic orientations the ITB of $O(n \log{n})$ in the standard comparison tree model is too week for this problem. The matter is further complicated if one is also given the guarantee that the graph $G$ is {\it sortable}. We say $G$ is sortable $G$ can be totally sorted. This restriction further reduces the number of possible answers for graphs with small number of edges. For example if $G$ has $\le n-1$ edges then we can determine the unique total order by just making one comparison. Since any acyclic orientation of the edges of $G$ must give a hamiltonian path and $G$ has $\le n-1$ edges, the edges must link consecutive vertices in the unknown order. A solitary probe is then used to determine the direction of this ordering. In this paper we take $G$ to be arbitrary and not necessarily sortable. Hence by sorting $G$ we mean determining the orientations of the edges of $G$ such that the resulting partial order (which is unique) has the maximum number relations.

In this paper we propose the first non-trivial deterministic algorithm under the probe complexity model as well as a randomized algorithm. The results are expressed in terms of $n$ and $q$. Expressing the results in terms the number of forbidden edges fits naturally with the problem. First of all $q$ and $w$ are related. Let $P_G$ be the poset found after sorting $G$. We have $q \ge \mbox{\# of incomparable pairs in}\ P_G \ge {w \choose 2}$. Hence, $w = O(\sqrt{q})$. Although we cannot directly compare the probe complexity  used in this paper with the query complexity in \cite{16} but it gives a better sense of the relatedness of the two models. Secondly, in the absence of any other structural properties of the input graph $G$, $q$ gives a good indication of how difficult it is to sort $G$. For example, when $q = O(\log{n})$, it is easy to see that one can sort in $O(n \log{n})$ total time. To do this we pick an arbitrary pair of non-adjacent vertices and take out one of them, removing it from the graph. We do the same thing with the remaining graph until the graph remaining is a clique. It is clear that we had to take out at most $O(\log{n})$ vertices. Then we sort this graph with $O(n \log(n))$ comparisons and merge the vertices we had remove previously by probing all the remaining undirected edges, which is at most $O(n \log{n})$. On the other extreme, if $|E| = {n \choose 2} - q = O(n)$ then it can be shown that we need to make $\Omega(|E|)$ probes to determine the partial order, since the complete bipartite graph $K(A,B)$ with $|A| \ll |B|$ has many acyclic orientations\cite{19,20}. So in this case one has to probe most of the allowed edges.

In the context of randomized algorithms, this problem has been studied in \cite{1, 15}. The authors in \cite{1} proposed a randomized algorithm that sorts $G$ with a probe complexity of $\widetilde{O}(n^{3/2})$ with high probability\footnote{By \textit{high probability} we mean that the probability tends to 1 as $n \to \infty$.}. However their implementation uses as a sub-routine a poly-time uniform sampling algorithm to sample points from a convex polytope\cite{21}. The authors did not discuss the exact bound on the total complexity in their paper. At each step the algorithm either finds a {\it balancing edge}\footnote{An edge in $G$ revealing whose orientation is guaranteed to reduce the number of linear extension of the current partial order by a constant fraction. The pair of vertices incident to this edge is referred to as a {\it balancing pair}.} or finds a subset of elements that can be sorted quickly. For an arbitrary $G$ it is not guaranteed that a {\it balancing edge} always exists. However, when $G$ is the complete graph there always exists a {\it balancing edge} that reduces the number of linear extension at-least by a factor of $8/11$ \cite{13}.

\subsection{Our Results}
The main contributions of this paper are as follows:

\begin{itemize}
   \item Given a comparison graph $G$ we propose a deterministic algorithm that sort $G$ with $O((q + n)\log{n})$ probes. The total complexity of our algorithm is $O(n^2 + q^{\omega/2})$, where $\omega \in [2, 2.38]$ is the exponent in the complexity of matrix multiplication. We use only elementary methods in our algorithm. We start by finding a set of large enough cliques in $G$ and use its elements to determine a good pivot. This is then applied recursively to induced subgraphs of $G$ to generate a collection of partial orders. We then merge these partial orders in the final step. 
   
   \item We propose a randomized algorithm which sorts $G$ with $O(n^2/\sqrt{n + q} + n\sqrt{q})$ probes with high probability. We use a random graph model for this purpose. The method uses only elementary techniques and unlike in \cite{1} has a total run time of $O(n^\omega)$ in the worst case.
   
   \item When $G$ is a random graph with edge probability $p$ we show that one can sort $G$ with high probability using only $\widetilde{O}(\min{(n^{3/2}, pn^2)})$ probes. 
\end{itemize}

\noindent The rest of this paper is organized as follows: in section 1.2 we introduce some definitions and lemmas for later use. Section 2 details the proposed deterministic algorithm. In section 3 we introduce the randomized algorithm and its extension to random graphs.
%
%
%
\subsection{Definitions}
Recall $G(V, E)$ is the input graph on the set $V$ of elements to be sorted. A pair of vertices $(u, v)$ can be compared if $(u, v) \in E$, otherwise, we say the pair is forbidden and is in $E_f$. The graph $G$ is given to us by our adversary. Let $G_i$ be the graph after $i$-edges have been oriented and $P_i$ be the associated partial order. We denote the degree of a vertex $v$ by $d(v)$ and $n(v) = n - 1 - d(v)$ is the number of vertices that are not adjacent to $v$. The set of neighbors of a vertex $v$ is denoted by $N(v)$. We use the notation $E(A,B)$ we denote the set of edges between the sets of vertices $A, B \subset V$. We also define the little-$o$ notation to remove any ambiguity from our exposition.

\begin{definition}
   If $f(n) \in o(g(n))$ then $f(n) \in O(g(n))$ but $f(n) \not \in \Omega(g(n))$.
\end{definition}

\begin{mylmm}
	Let $\{f_1(n),f_2(n),...,f_k(n) \}$ be a finite set of non-negative monotonically increasing functions in $ n $ such that:
	\begin{enumerate}
		\item $ \forall i\ f_i(n) \in o(g(n)) $ 
		\item $ \sum_{i}{f_i(n)} \le cg(n)$
	\end{enumerate}
	
\noindent If $ F(n) = \sum_{i}{f_i^2(n)}$ then $F(n) \in o(g^2(n))$.
\end{mylmm}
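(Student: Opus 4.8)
The plan is to reduce the claimed bound on the sum of squares to the elementary inequality
\[
F(n) \;=\; \sum_{i=1}^{k} f_i^2(n) \;\le\; \Bigl(\max_{1 \le i \le k} f_i(n)\Bigr)\sum_{i=1}^{k} f_i(n),
\]
which holds because each $f_i(n) \ge 0$, so $f_i^2(n) \le (\max_j f_j(n)) f_i(n)$ termwise. By hypothesis (2) the trailing sum is at most $c\,g(n)$, so it remains to control the factor $\max_i f_i(n)$.

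The heart of the argument is the observation that the pointwise maximum of a \emph{finite} collection of functions, each of which is $o(g(n))$, is itself $o(g(n))$. I would prove this directly from the definition: fix $\varepsilon > 0$; for each $i$ there is an $N_i$ with $f_i(n) < \tfrac{\varepsilon}{c}\, g(n)$ for all $n > N_i$; since $k$ is finite, $N := \max_i N_i$ is well defined, and for $n > N$ we have $\max_i f_i(n) < \tfrac{\varepsilon}{c}\, g(n)$ simultaneously for every $i$. Here the finiteness of $k$ is exactly what makes taking the maximum of the thresholds legitimate — this is the one place the hypothesis that we have finitely many functions is essential, and it is the only subtlety in the proof.

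Combining the two steps, for $n > N$ we get $F(n) < \tfrac{\varepsilon}{c}\,g(n)\cdot c\,g(n) = \varepsilon\, g^2(n)$, and since $\varepsilon > 0$ was arbitrary this shows $F(n) \in o(g^2(n))$, as desired. I would also remark that hypothesis (1) already forces $g(n) > 0$ for all large $n$ (otherwise $o(g(n))$ is vacuous for those $n$), so none of the divisions above cause trouble, and the monotonicity assumption on the $f_i$ is in fact not needed for this particular lemma — it is presumably stated because later applications produce monotone functions. No serious obstacle is anticipated; the proof is a two-line estimate plus the finite-maximum remark.
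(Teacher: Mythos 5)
Your proof is correct, and it takes a genuinely different route from the paper's. The paper argues in two stages that mirror its Definition~1 of little-$o$ (which there means ``$O$ but not $\Omega$''): first it gets $F(n)\le c^2g^2(n)$ by squaring hypothesis~(2) and discarding the cross terms, and then it rules out $F(n)\in\Omega(g^2(n))$ by contradiction, choosing for each $i$ a threshold constant $c_i$ with $f_i(n)\le c_i g(n)$ eventually, summing to get $\sum_i c_i^2\ge\hat c$, and observing that this is impossible when the $c_i$ are all taken small enough. You instead use the single termwise estimate $F(n)\le(\max_i f_i(n))\sum_i f_i(n)\le c\,g(n)\max_i f_i(n)$ and then the observation that a finite maximum of $o(g(n))$ functions is $o(g(n))$, giving the conclusion in one stroke. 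Your version is shorter and avoids the square-expansion step entirely; the paper's version separates the $O$ and not-$\Omega$ halves to match its own definition. Both arguments lean on finiteness of $k$ in the same place (taking a maximum/sum of finitely many thresholds or constants), and both in fact quietly invoke the standard eventual-bound reading of little-$o$ rather than the weaker ``$O$ and not $\Omega$'' reading the paper's Definition~1 literally gives — so your departure from Definition~1 matches the paper's own practice. Your side remarks that $g(n)>0$ eventually and that monotonicity of the $f_i$ is unused here are both accurate.
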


\begin{proof}
   
   See appendix.
\end{proof}

\begin{mylmm}
   Let $T(n) = \sum_{i=1}^{k}{T(n_i)} + f(n)$ where $\sum_i{n_i} \le \delta n$ for some $0 < \delta < 1$ and $f(n) \in o(n^2)$. Then, $T(n) \in o(n^2)$.
\end{mylmm}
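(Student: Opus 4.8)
The plan is to unroll the recurrence into its recursion tree and bound $T(n)$ by summing over all tree nodes, exploiting that the aggregate subproblem size decays geometrically with depth. Place the root at depth $0$ with size $n$, and for $d\ge 0$ let $S_d$ be the sum of the sizes of all subproblems at depth $d$. Summing the hypothesis $\sum_i n_i\le\delta n$ over every node at depth $d$ gives $S_{d+1}\le\delta S_d$, hence $S_d\le\delta^d n$ and $\sum_{d\ge 0}S_d\le n/(1-\delta)$. Since every subproblem that is recursed on has positive size (an empty part is no real recursive call, and $1\le n_i<n$ always), level $d$ contains at most $S_d\le\delta^d n$ nodes, so the whole tree has at most $n/(1-\delta)$ nodes; in particular it is finite and the recursion terminates because each child satisfies $n_i\le\delta n<n$.

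Now $T(n)=\sum_{\text{leaves }\ell}T(n_\ell)+\sum_{\text{internal }v}f(n_v)$. For the leaves, $T$ of a size below the constant base-case threshold is $O(1)$, so the leaf contribution is $O(1)\cdot n/(1-\delta)=O(n)=o(n^2)$. For the internal nodes I would fix an arbitrary $\epsilon>0$ and use $f(n)\in o(n^2)$ to obtain a constant $M_\epsilon$ with $f(m)\le\epsilon m^2$ for all $m\ge M_\epsilon$, while $f(m)\le C'$ over the finitely many $m<M_\epsilon$. Splitting the node set at $M_\epsilon$: nodes of size $<M_\epsilon$ contribute at most $C'\cdot n/(1-\delta)=O(n)$, and nodes of size $\ge M_\epsilon$ contribute at most $\epsilon\sum_v n_v^2\le\epsilon\bigl(\sum_v n_v\bigr)^2\le\epsilon\,n^2/(1-\delta)^2$, using $\sum a_i^2\le(\sum a_i)^2$ for nonnegative reals together with the bound from the first paragraph. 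Hence $T(n)\le\epsilon\,n^2/(1-\delta)^2+O(n)$ for all large $n$, so $\limsup_{n\to\infty}T(n)/n^2\le\epsilon/(1-\delta)^2$; as $\epsilon$ was arbitrary, $T(n)\in o(n^2)$.

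The step that needs care — rather than one that is genuinely hard — is the handling of $f$: it is not assumed monotone, so one cannot simply dominate every $f(n_v)$ by $f$ evaluated at the root size. That is precisely why the argument separates the asymptotic regime $m\ge M_\epsilon$ from the finite exceptional range, and why the geometric estimate $S_d\le\delta^d n$ (not merely $S_d\le\delta n$ at the first level) is essential: it is what keeps the node count linear and $\sum_v n_v^2$ quadratic in $n$. Everything else is routine bookkeeping.
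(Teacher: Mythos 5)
Your proof is correct, and it takes a genuinely different route from the paper's. The paper's proof assumes $T(n)=\Omega(n^\alpha)$ with $\alpha\ge 1$, infers superadditivity to collapse the recurrence to $T(n)\le T(\delta n)+f(n)$, and then appeals to case~3 of the Master theorem to conclude $T(n)=\Theta(f(n))=o(n^2)$. You instead unroll the recursion tree, establish the geometric decay $S_d\le\delta^d n$ of aggregate level sizes (hence $\sum_v n_v\le n/(1-\delta)$ and at most $n/(1-\delta)$ nodes in total), and charge the cost at each node directly: the leaves contribute $O(n)$, and for the internal nodes you split at the $\epsilon$--$N$ threshold $M_\epsilon$ of $f\in o(n^2)$ to get $\sum_v f(n_v)\le O(n)+\epsilon\bigl(\sum_v n_v\bigr)^2\le O(n)+\epsilon n^2/(1-\delta)^2$, then let $\epsilon\to 0$. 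Your argument is both more elementary and more careful: the paper's superadditivity step does not actually follow from $T(n)=\Omega(n^\alpha)$ alone (asymptotic lower bounds do not give pointwise superadditivity or monotonicity), and Master-theorem case~3 requires $f(n)=\Omega(n^{\log_b a+\epsilon'})$ together with a regularity condition, neither of which is supplied for an arbitrary $f\in o(n^2)$. Your tree-summation argument sidesteps both issues and explicitly isolates the one real subtlety, the possible non-monotonicity of $f$, which the paper's proof glosses over. (Both proofs tacitly read $o(\cdot)$ in the standard $\epsilon$--$N$ sense rather than the weaker ``$O$ but not $\Omega$'' sense of the paper's Definition~1; that is the reading under which the lemma is actually true, and it is the one the paper itself uses elsewhere.)
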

\begin{proof}
See appendix.
\end{proof}

\section{A Deterministic Algorithm For Restricted Sorting}
First we look at a simple case where $q = O(n)$. We will use some of the main ideas from this algorithm to extend it to the general case. This initial algorithm will have a worse probe complexity than the main algorithm.

\subsection{A Restricted Case}
Assume $q \le cn$ for some constant $c$. Let $R = \{v \in V\ |\ n(v) > c_1\}$ for some constant $c_1$. Then $|R| \le  (2c/c_1)n$. We choose $c_1 = 4c$. This is obvious from the fact that $\sum_{v}{n(v)} \le 2cn$. Let $S = V \setminus R$ and $G[S]$ be the induced subgraph generated by $S$. We have $|S| \ge n/2$ and  if $v \in S$ then $n(v) \le c_1$. 

\begin{myclm}
   There exists a subset $X \subset S$ such that $|X| \ge n/2(4c+1)$ and $G[X]$ is a complete graph.
\end{myclm}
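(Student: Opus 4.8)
The plan is to find the large clique inside $S$ greedily, exploiting the fact that every vertex of $S$ has at most $c_1 = 4c$ non-neighbors. The key observation is that the "non-adjacency" graph restricted to $S$, call it $H = \overline{G}[S]$, has maximum degree at most $c_1$. A graph with maximum degree $\Delta$ has an independent set of size at least $|V(H)|/(\Delta+1)$ (for instance, by repeatedly picking a vertex and deleting it together with its $\le \Delta$ neighbors, or by a greedy colouring argument). An independent set in $H$ is exactly a clique in $G[S]$, so this immediately produces a clique $X \subseteq S$ with $|X| \ge |S|/(c_1+1) \ge (n/2)/(4c+1) = n/2(4c+1)$, which is the claimed bound.

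First I would make precise that $|S| \ge n/2$: this follows since $|R| \le (2c/c_1)n = (2c/4c)n = n/2$, using $\sum_v n(v) = 2q \le 2cn$ and Markov's inequality on the threshold $n(v) > c_1$. Second I would set up $H = \overline{G}[S]$ and note $\Delta(H) \le c_1 = 4c$ because for $v \in S$ the number of non-neighbours in all of $V$ is $n(v) \le c_1$, hence certainly at most $c_1$ within $S$. Third I would run the standard greedy extraction of an independent set in $H$: initialize $X = \emptyset$, and while $H$ is nonempty pick any vertex $v$, add it to $X$, and delete $v$ and its (at most $c_1$) neighbours from $H$; each iteration removes at most $c_1 + 1$ vertices, so the loop runs at least $|S|/(c_1+1)$ times, giving $|X| \ge |S|/(c_1+1)$. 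Finally I would translate back: $X$ is independent in $H = \overline{G}[S]$, so every pair in $X$ is an edge of $G$, i.e. $G[X]$ is complete, and $|X| \ge (n/2)/(4c+1) = n/2(4c+1)$.

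There is essentially no hard step here — the only thing to be slightly careful about is bookkeeping with the constants (that $c_1 = 4c$ is exactly what makes $|R| \le n/2$ and simultaneously feeds the $(4c+1)$ in the denominator), and noting that this construction is non-adaptive, using zero probes, which is what the later algorithm needs. If one wanted a cleaner statement one could invoke Turán's theorem on $\overline{G}[S]$ directly instead of the greedy argument, but the greedy version is elementary and self-contained, matching the paper's stated preference for elementary methods.
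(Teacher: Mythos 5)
Your proof is correct and uses essentially the same greedy argument as the paper: iteratively pick a vertex, discard it together with its at most $c_1$ non-neighbours, and observe that each iteration removes at most $c_1+1$ vertices from the candidate pool, yielding $|X| \ge |S|/(c_1+1) \ge n/2(4c+1)$. Phrasing it as extracting an independent set in the complement graph $\overline{G}[S]$ of maximum degree $\le c_1$ is a cleaner and more standard packaging of the identical underlying idea, and your bookkeeping of $|R| \le n/2$ via $\sum_v n(v) = 2q$ is also the same as the paper's.
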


\begin{proof}
	Let us construct $X$ explicitly. We start with $X = {u}$, where $u$ is an arbitrary vertex in S. We pick successive vertices from $S$ iteratively. Let $v$ be last vertex to be added to $X$. Since $ v $ has at least $n-c_1$ neighbors, whenever we pick a neighbor of $v$ from $S$ to add to $X$ we loose at most $c_1 + 1$ vertices (which include the vertex we picked). Hence if we pick neighbors of $v$ the size of $X$ is at least $|S|/ (c_1 + 1) \ge n/2(4c+1)$.
	
\end{proof}

\noindent Clearly the above procedure runs in $O(n^2)$ time and makes no comparisons. Now we are ready to describe our algorithm. The main algorithm is recursive and we have two levels of recursion. We shall break the algorithm into several steps.

\subsubsection{Initial Sorting:}
Given the input graph $G$, let $X$ be a clique, with $|X| \ge n/2(4c+1)$ (Claim 1). Let $Y = V \setminus X$. Note that $|Y| \le n - n/2(4c+1) = (8c+1/8c+2)n$. Now we sort $X$ using $O(n\log{n})$ comparisons as $G[X]$ is a complete graph. We can use a standard comparison based sorting algorithms for this purpose. Now we have two possibilities: 
\begin{itemize}
\item[Case 1:] If $|Y| = o(n)$\footnote{Note that ``$|Y| = o(n)$'' is not an algorithmic test. We use it in this algorithm to establish a framework for the second algorithm, which uses a traditional test.}, then we probe all edges of $G[Y]$ and $G[Y, X]$, where $G[Y, X]$ is the induced bipartite graph generated by the sets $Y$ and $X$. Then we take the transitive closure of the resulting relations, which does not need any additional probes. It can be easily seen that the number of probe made in the previous step is $o(n^2)$. For the sake of contradiction if we assume that it is not so then $|X||Y| + |Y|^2/2 \ge dn^2$ for some $d$. Which implies $|Y| \ge dn$, since $|X| + |Y|/2 \le n$. But then, $|Y| = \Omega(n)$, which is not true according to our earlier assumption. So, in this case we would have sorted $V$ by making only $o(n^2)$ probes.
      \item[Case 2:] Otherwise $|Y| \ge \delta n$, for some constant $\delta$. In this case we recursively partition $Y$ based on elements from $X$. We call this the partition step.

\end{itemize}

\subsubsection{Partition step:}
We will recursively partition both $X$ and $Y$. To keep track of the current partition depth we rename $X$ to $X_{00}$ and $Y$ to $Y_{00}$. We pick $m_{00}$ the median of $X_{00}$ (after $X_{00}$ is sorted). Since $X_{00} \subset S$ we have $n(m_{00}) \le c_1$. So $m_{00}$ will be comparable to all but at most $c_1$ elements of $Y_{00}$. Let,\[A_{00} = \{v \in Y_{00}|\ v \in N(m_{00})\}\] and $B_{00} = Y_{00} \setminus A_{00}$. Note $|B_{00}| \le c_1$. Now let $U_{00}$ be the subset of $A_{00}$ whose elements are $\ge m_{00}$ and the set $L_{00}$ accounts for the rest of $A_{00} \setminus m_{00}$. Let $X_{10}$ and $X_{11}$ be the elements of $X_{00}$ that are $<$ and $\ge$ to $m_{00}$ respectively. We recursively partition the sets $U_{00}$ and $L_{00}$ using the medians of $X_{10}$ and $X_{11}$.The $B$-sets are kept for later processing. We rename the sets  $U_{00}$ and $L_{00}$ to $Y_{10}$ and $Y_{11}$. So, the pairs $(X_{10}, Y_{10})$ and $(X_{11}, Y_{11})$ are processed as above generating the sets $A_{10}, A_{11}, B_{10}$ and $B_{11}$. We continue doing this until the size of the $X$-set is $\le c_2$, where $c_2$ is some constant. At this point we don't know the size of the $Y$-set paired with it. There are two cases we need to consider:

\begin{enumerate}
   \item[Case 1:] $|Y| = o(n)$ Then we probe all the edges of $G[Y]$ and $G[X, Y]$ which uses at most $c_2|Y| + {{|Y|} \choose {2}}$ number of comparisons. 
   
   \item[Case 2:] $|Y| \ge \delta n$. Then we have $|Y| \ge \delta n$ for constant $\delta$. Hence the graph $G[Y]$ can have at most $\le {(c / {\delta})} |Y|$ missing edges. This satisfies our initial premise that the number of missing edges in $G[Y]$ is linear in the number of vertices. Hence we can apply our initial strategy recursively. That is we first find a large enough clique (which according to Claim 3 must exist) and then use it to partition the rest of the set $Y$.  
 
\end{enumerate}
Let us visualize using a partial recursion tree $T$ (see Fig.1 below). We shall call $T$ the partial recursion tree for reasons soon to be clear. At the root we have the pair $(X_{00}, Y_{00})$. It has two children node $(X_{10}, Y_{10})$ and $(X_{11}, Y_{11})$ each having two children of their own and so on. Now at each level, the size of the $X$-set gets halved. So, the number of levels in $T$ is at most $O(\log {n})$. However, the $Y$-sets need not get divided with equal proportions. So, at the frontier (the deepest level) we will have nodes of the above two types, depending on the size of their corresponding $Y$-sets. Let the collection of these frontier nodes be partitioned in two sets $\Phi$ and $\Psi$ corresponding to case 1 and case 2 respectively. 

We can conclude that the total number of probes needed to compute all relations in $\Phi$ is $o(n^2)$. This follows from Lemma 1. Here we can map the size of the $Y$-sets of the nodes in the collection $\Phi$ to the functions $f_i(n)$. We know that the total elements in the union of these $Y$-sets is $\le |Y_{00}| \le (8c+1/8c+2)n$. The total number of probes will be $F(n)$ in worst case. What is the total number of probes on the internal nodes of $T$? We know that in the internal nodes we compare the median of the $X$-set with the elements of the $A$-set, which takes $|A|$ probes. Since union of these $A$-sets cannot exceed the total number of vertices in $G$($n$), at each level of $T$ we do at most $O(n)$ probes, totaling to $O(n\log{n})$ probes over all the internal nodes.

 Unlike the nodes in $\Phi$, the nodes in $\Psi$ recursively calls the initial strategy using the input graph $G[Y]$. Let the probe complexity of our initial strategy be $Q(n)$. Then the recursion for $Q$ is as follows: \[Q(n) = \sum_{i=1}^{|\Psi|}{Q(n_i)} + o(n^2)\]
Here we assume that the nodes in $\Psi$ are indexed according to some arbitrary order. We can solve this recurrence using Lemma 2 giving $Q(n) \in o(n^2)$, since $\sum_{i=1}^{|\Psi|}{n_i} \le (8c+1/8c+2) n$. Note here that $|\Psi|$ is bounded by a constant since the size of  $Y$-sets are in $\Omega(n)$.

We call $\hat{T}$ the full tree. All leaf nodes in $\hat{T}$ are in $\Phi$. It is straightforward to show that $\hat{T}$ has  $O(\log^2{n})$ levels. Since any of the leaf nodes of $T$ has $|Y| \le \beta n$ (where $\beta = (8c+1/8c+2)$), its subtree in $\hat{T}$ can have at most $\alpha\log{\beta n} = \alpha\log{n} - \alpha\beta$ levels, and any of its leaves having at most $\alpha\log{n} - 2\alpha\beta$ levels and so on for some constant $\alpha$. 

\begin{figure}[h]
\includegraphics[width=11cm]{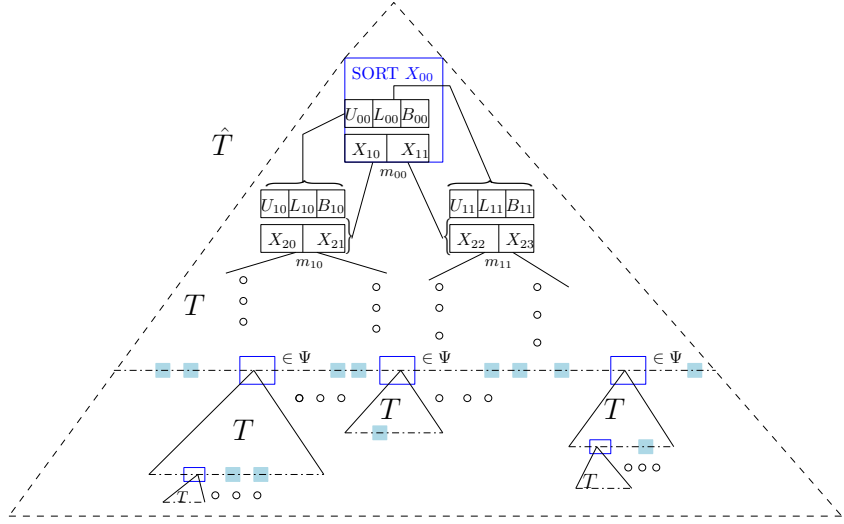}
\centering
\caption{Visualizing the steps. At the bottom of $T$ the shaded boxes represents the $\Phi$-nodes and the blue rectangles the $\Psi$-nodes. The outer dashed triangle represents the full tree $\hat{T}$. The tree $\hat{T}$ is created during the partitioning step and in the merge step we start from the deepest leaves of $\hat{T}$ and move upwards.} 
\end{figure}

\subsubsection{Merge step:}
Once we have completed building $\hat{T}$ we proceed with the final stage of our algorithm. Recall that during the forward partition step we had generated many of these $B$-sets in the internal nodes of $\hat{T}$. Now we start from the leaves of $\hat{T}$ and proceed upwards. Each pair of leaf nodes $l,r$ sharing a common parent $p$, sends a partial order to it them (computed as in case 1). When we merge this two partial orders we know that no extra probe is needed since they have already been split by the median of the $X$-set of $p$. What remains is to probe all edges between the $B$-set in $p$ and elements in this partial order (which constitutes the set of elements $A \cup X$ of the node $p$) as well as the edges in $G[B]$. Then we pass the resulting partial order to the parent of $p$, and so on. Since the size of the $B$-sets are bounded by $c_1$ (at any level in $\hat{T}$), total number of probes we make is then $\le c_1\sum_i(|A_i| + |X_i| + c_1)$. The sum is taken over all the nodes in that level. Hence this is bounded by $c_1n$, so at each level we do at most $O(n)$ probes in the backward merging step. Since there are at most $O(\log^2{n})$ levels, it totals to $O(n \log^2 n)$ additional probes. Adding this to the probe cost of partitioning in the forward step does not effect the total probe complexity, which was $o(n^2)$. The final step is to compute the transitive closure of the resulting set of relations, which can be done without any additional probing. Since computing the transitive closure is equivalent to boolean matrix multiplication\cite{23} the total complexity is $O(n^\omega)$. 
%
%
%
\subsection{The General Case}
We will define the sets $R$ and $S$ analogously to section 2.1. We have, $R = \{v \in V\ |\ n(v) > c_1q/n \}$ for some constant $c_1$. With $c_1 = 4$, we get $|R| \le \delta_1 n$ where $\delta_1 \le 2/c_1 = 1/2$. Hence $|S| \ge (1 - \delta_1)n \ge n/2$. Now we will apply Claim 1 successively to construct a ``big-enough'' set $X \subset S$ which we will use to find an approximate median of $V$. This set $X$ consists of disjoint subsets $X_i$ such that $G[X_i]$ is a clique. 

\subsubsection{Constructing $X$:} Let us define $S_i = S \setminus \bigcup_{j=1}^i{X_j}$. We construct the first clique $X_1 \subset S$ using the method detailed in Claim 1. There are two cases: 

\begin{enumerate}
   \item[Case 1] $q < n$: In this case we can show that $|X_1| \ge (n/2) / (c_1q/n + 1)  \ge n/10$. We take the first $n/10$ elements and keep the rest for the second round. Now we construct the second clique $X_2$ from $S_1$ which has at least $2n/25$ vertices. We let $X = X_1 \cup X_2$. Hence $X$ has at least $9n/50$ vertices.
   \item[Case 2] $q \ge n$:  In this case we have $|X_1| \ge (n/2) / (c_1q/n + 1) \ge n^2/10q$. Again we take $|X_1| = (1 / 10)n^{2}/q$ discarding some vertices if necessary. Similarly we construct $X_2 \subset S_1$. It can be shown that $|X_2| \ge (n^{2} / 10q)(1 - n/5q)$ and we keep $(n^{2} / 10q)(1 - n/5q)$ vertices in $X_2$  and the rest are discarded to be processed the next round. In general for the $r^{\mbox{th}}$ clique $X_r$ we have $|X_r| \ge (n^{2} / 10q)(1 - n/5q)^{r-1}$. Now we let $X = \bigcup_{i=1}^r{X_i}$. We will show that $|X| \ge \delta_2n$ for some constant $\delta_2 > 0$. We let $r = 5q/n + 1$. Then we have 

\begin{align}
  \nonumber  |X_r| &\ge (n^{2} / 10q)(1 - n/5q)^{r-1} \ge (n^{2} / 10q)(1 - n/5q)^{5q/n} > 3n^{2} / 100q
\end{align}

\noindent since $q \ge n$. Hence, $|X| = \sum_{i=1}^r{|X_i|} \ge r|X_r| \ge (9/50)n$, giving $\delta_2 = 9/50$. Now for each $X_i$ ($1 \le i \le r$) we keep a subset $Y_i$ of size $|X_r|$ and throw away the rest. Clearly, for each $i$, the induced sub-graph $G[Y_i]$ is also a clique. Let $Y = \bigcup_{i=1}^r{Y_i}$. We also have $|Y| \ge (9/50) n$. 

\end{enumerate}

 \subsubsection{Computing an approximate median of $V$:} We shall compute an approximate median with respect to all the vertices (the set $V$) and not just the set $S$. We will find a median element that divides the set $V$ in constant proportions. This can be done easily using the set $Y$. For each $Y_i$ we find its median using $\Theta(|Y_i|)$ probes since $G[Y_i]$ is a complete graph. Let this median be $m_i$ and $M$ be the set of these $r$ medians. Since $m_i \in S$, $n(m_i) \le 4q/n$. We define the upper set of $m \in M$ with respect to a set $A \subset V$ ($m$ may not be a member of $A$) as $U(m, A) = \{a \in A\ |\ a > m\}$. Similarly we define the lower set $L(m, A)$. We want to compute the sets $U(m, Y)$ and $L(m, Y)$. However, $m$ may not be neighbors of all the elements in $Y$. So we compute approximate upper and lower sets by probing all the edges  in $E(\{m\}, Y\setminus\{m\})$. These sets are denoted by $\widetilde{U}(m, Y)$ and $\widetilde{L}(m, Y)$ respectively. It is easy to see that there exists some $m \in M$ which divides $Y$ into sets of roughly equal sizes (their sizes are a constant factor of each other). In fact the median of $M$ is such an element. However the elements in $M$ may not all be neighbors of each other hence we will approximate $m$ using the ranks of the elements in $M$ with respect to the set $Y$ (which is $|\widetilde{L}(m, Y)|$). Next we prove that the element $m^*$ is an approximate median of $M$, picked using the above procedure, is also an approximate median of $Y$.
 
\begin{myclm}
    The element $m^*$ picked as described above is an approximate median of $Y$.
\end{myclm}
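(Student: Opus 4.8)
The plan is to phrase everything in terms of \emph{true} ranks inside $Y$ and then separate a structural fact (a near-middle element of $M$, ordered by value, already splits $Y$ into constant proportions) from a robustness fact (the element $m^*$ selected via the probed quantities $|\widetilde L(\cdot,Y)|$ cannot be far, in rank, from such an element). For $m\in Y$ write $\rho(m)=|L(m,Y)|$ for the rank of $m$ inside $Y$; since $m^*\in Y$ we have $\rho(m^*)+1+|U(m^*,Y)|=|Y|$, so it suffices to produce a constant $c>0$ with $\rho(m^*)\ge c|Y|$ and $|U(m^*,Y)|\ge c|Y|$. The single quantitative input is that every $m_i\in M$ lies in $S$ and hence misses at most $n(m_i)\le 4q/n=:\eta$ vertices of $Y$; therefore $|\widetilde L(m_i,Y)|\le\rho(m_i)\le|\widetilde L(m_i,Y)|+\eta$, i.e.\ the probed rank underestimates the true rank by at most $\eta$. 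I will also use that $Y=\bigcup_{i=1}^{r}Y_i$ is a disjoint union of cliques of common size $s:=|X_r|$, so $|Y|=rs$.

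For the structural fact I would sort $M$ by value, $m_{(1)}<\dots<m_{(r)}$ (the same order as by $\rho$), and look at a near-middle element $\mu=m_{(\lceil r/2\rceil)}$. Since $m_{(i)}$ is the median of the clique $Y_{(i)}$, at least $h:=\lfloor(s-1)/2\rfloor$ vertices of $Y_{(i)}$ lie below it and at least $h$ lie above it; taking the lower halves of $Y_{(1)},\dots,Y_{(k)}$ (pairwise disjoint and all $<m_{(k)}$) and the upper halves of $Y_{(k)},\dots,Y_{(r)}$ yields $\rho(m_{(k)})\ge kh$ and $|U(m_{(k)},Y)|\ge(r-k+1)h$. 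Evaluating at $k\approx r/2$ and using $h\ge(s-2)/2$, both $\rho(\mu)$ and $|U(\mu,Y)|$ are $\Omega(rh)=\Omega(rs)=\Omega(|Y|)$, the additive $O(r)$ losses being negligible as long as the cliques $Y_i$ are not of constant size (which holds throughout the useful range of $q$). This is exactly the assertion ``the median of $M$ divides $Y$ in constant proportions'' that the preceding text states without proof.

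For the robustness fact I would argue directly from the selection rule. Because $|\widetilde L(m^*,Y)|$ is the median of $\{\,|\widetilde L(m_j,Y)|:1\le j\le r\,\}$, at least $\lceil r/2\rceil$ indices $j$ satisfy $|\widetilde L(m_j,Y)|\le|\widetilde L(m^*,Y)|$, and for each such $j$ we get $\rho(m_j)\le|\widetilde L(m_j,Y)|+\eta\le|\widetilde L(m^*,Y)|+\eta\le\rho(m^*)+\eta$; hence the $\lceil r/2\rceil$-th smallest value of $\rho$ on $M$, which is $\rho(\mu)$, is at most $\rho(m^*)+\eta$. A symmetric count on the indices with $|\widetilde L(m_j,Y)|\ge|\widetilde L(m^*,Y)|$ (using $|\widetilde L(m^*,Y)|\ge\rho(m^*)-\eta$) gives $\rho(\mu)\ge\rho(m^*)-\eta$ up to one index, so $|\rho(m^*)-\rho(\mu)|\le\eta$. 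Combined with the structural fact, $\rho(m^*)\ge\rho(\mu)-\eta\ge\Omega(|Y|)-\eta$ and, since $|U(m^*,Y)|=|Y|-1-\rho(m^*)\ge|U(\mu,Y)|-\eta$, also $|U(m^*,Y)|\ge\Omega(|Y|)-\eta$.

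The step I expect to actually require care is closing these inequalities: one must check that $\eta=4q/n$ is a strictly smaller fraction of $|Y|$ than the $\Omega(|Y|)$ coming from the structural bound. Since $|Y|=\Theta(n)$ --- concretely $|Y|\ge\tfrac{9}{50}n$ from the construction of $X$ --- this holds whenever $q$ is bounded away from $n^2$ by an appropriate constant (equivalently, whenever the cliques $Y_i$ built in Case~2 are not of constant size), and then $\rho(m^*)$ and $|U(m^*,Y)|$ are both $\Omega(|Y|)$, which is the claim. The only genuinely tedious part is bookkeeping the constants while keeping the Case~1 values ($r$ a small constant, $s=\Theta(n)$) and the Case~2 values ($r=\Theta(q/n)$, $s=\Theta(n^2/q)$) of $r$, $s$, $|Y|$ and $\eta$ consistent; I would carry a weak constant such as $\tfrac1{12}$ rather than optimize it, and relegate that arithmetic to the write-up.
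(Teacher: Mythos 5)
Your proof takes essentially the same route as the paper's: first show the value-median $\mu$ of $M$ already splits $Y$ in constant proportions by summing the contributions of the lower (resp.\ upper) halves of the disjoint cliques $Y_i$, then bound the rank displacement $|\rho(m^*)-\rho(\mu)|$ by $\eta=4q/n$, and close with the arithmetic condition $n^2\ge 200q$ giving $|L(m^*,Y)|,|U(m^*,Y)|\ge n/40$. Your two-sided counting argument for the displacement bound (at least $\lceil r/2\rceil$ of the $m_j$ have $|\widetilde L(m_j,Y)|\le|\widetilde L(m^*,Y)|$, hence true rank $\le\rho(m^*)+\eta$, and symmetrically) is a cleaner derivation of the step the paper states somewhat tersely ("the rank of an element in the sorted order is at most $4q/n$ less than its actual rank\dots\ thus $m^*$ cannot be more than $4q/n$ apart from $m_{\lfloor r/2\rfloor}$"); the one detail to carry into the write-up is that in Case~1 ($q<n$, $r=2$, unequal clique sizes) the uniform-size accounting $|Y|=rs$ does not apply and the paper just takes for $m^*$ the one of the two medians with larger $|\widetilde L(\cdot,Y)|$.
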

 
\begin{proof}
First we show that the median of $M$ is an approximate median of $Y$. This can be easily verified. Let us take the elements in $M$ in sorted order $(m_1,...,m_r)$, so the median of $M$ is $m_{\lfloor r/2 \rfloor}$. Now $L(m_{\lfloor r/2 \rfloor}, Y) = \bigcup_{i=1}^{\lfloor r/2 \rfloor}{L(m_i, Y_i)}$. Since, the sets ${Y_i}$ are disjoint and $L(m_i, Y_i) \ge |X_r|/2$, we have $|L(m_{\lfloor r/2 \rfloor}, Y)| \ge |X_r| r/4$ (ignoring the floor). Similarly we can show that $|U(m_{\lfloor r/2 \rfloor}, Y)| \ge |X_r| r/4$. Hence $m_{\lfloor r/2 \rfloor}$ is an approximate median of $Y$. Now we show that $\mid|L(m^*, Y)| - |L(m_{\lfloor r/2 \rfloor}, Y)|\mid < 4q/n$. Consider the sorted order of elements in $M$ according to $|\widetilde{L}(m^*, Y)|$. Since each element in $m \in M$ has at most $4q/n$ missing neighbors in $Y$, we have $\mid|\widetilde{L}(m, Y)| - |L(m, Y)|\mid < 4q/n$. So the rank of an element in the sorted order is at most $4q/n$ less than its actual rank. Thus an element $m^*$ picked as the median of $M$ using its approximate rank $|\widetilde{L}(m, Y)|$ cannot be more than $4q/n$ apart from $m_{\lfloor r/2 \rfloor}$ in the sorted order of $Y$. Hence,

\begin{align}
   |L(m^*, Y)| \ge |X_r| r/4 - 4q/n \ge 9n/200 - 4q/n \ge  n/40
\end{align}

\noindent whenever $n^2 \ge 200 q$. In an identical manner we can show that $|U(m^*, Y)| \ge n/40 $. Hence, $m^*$ is an approximate median of $Y$. When $q < n$ we just take $m^*$ as the median with the higher $|\widetilde{L}(\cdot,Y)|$ value, which guarantees $|L(m^*, Y)| \ge n/40$ whenever $n^2 \ge 800q/13$. So we take $n^2 \ge 200 q$ to cover both the cases.
\end{proof}
  
\noindent It immediately  follows that $m^*$ is also an approximate median of $V$ with both $|L(m^*, V)|$ and $|U(m^*, V)|$ lower bounded by $n/40$. Lastly, we note that the above process of computing an approximate median makes $\Theta(q + n)$ probes. This follows from the fact that computing the medians makes $\Theta(n)$ probes in total and for each of the $\le 5q/n + 1$ medians we make $O(n)$ probes.
 
\subsubsection{A divide-and-conquer approach:} Now that we have computed an approximate median of $V$ we proceed with an recursive approach.  Let $m^*$ be the median. As in section 3.1 we partition $V$ into three sets $U$, $L$ and $B$. The $U$ and $L$ are the upper and lower sets with respect to $m^*$. $B$ is the set of vertices that do not fall into either, that is, they are non-neighbors of $m^*$. Since $m^* \in S$ we have $|B| \le  4q/n$. We recursively proceed to partially sort the sets $U$ and $L$ with the corresponding graphs $G[U]$ and $G[L]$ and keep $B$ for later processing (as we did in the merging step previously). Like before we can imagine a recursion tree $T$. Let $E_{f_P}$ be the set forbidden edges in $G[P]$. We take $n_P = |P|$ and $q_P = |E_{f_P}|$.
 For each node $P \in T$ there are two cases:
\begin{itemize}
   \item[Case 1:] When $n_P^2 \ge 200q_P$, we recursively sort $P$. In this case we can guarantee that the approximate median $m_P^*$ of $P$ will satisfy equation (1). That is both $|L(m_P^*, P)|$ and $|U(m_P^*, P)|$ is $\ge n_P/40$.    
   \item[Case 2:] Otherwise we probe all edges in $G[P]$. In this case $P$ will become a leaf node in $T$.
\end{itemize}

\noindent It can be easily seen that the depth of the recursion tree is bounded by $O(\log{n})$ since at each internal node $P$ of $T$ we pass sets of constant proportions (where the size of the larger of the two set is upper bounded by $(39/40)n_P$) to its children nodes.

\subsubsection{Merge Step:} In this step we start with the leaves of $T$ and proceed upwards. A parent node $P$ gets two partial orders from its left and right children respectively. Then it probes all the edges between its $B$-set and these partial orders to generate a new partial order and pass it on to its own parent. This step works exactly as the ``merge step'' of the previous algorithm. Only difference is that the $B$-sets here may not be of constant size but of size $\le 4q/n$.

\subsubsection{Probe Complexity:} We can determine the probe complexity by looking at the recursion tree $T$. First we compute it for the forward partition step. At each internal node of $T$ we compute a set of medians and pick one element from it appropriately chosen. Then we partition the set of elements at the node by probing all edges between the selected element and rest of the elements in the node. As mentioned before this only takes $\Theta(q_P + n_P)$ probes for some internal node $P$. We assume that all the leaves of $T$ are at the same depth, otherwise we can insert internal dummy nodes and make it so. At each level of $T$ the sum total of all the vertices in every node is $\le n$ and the sum total of the forbidden edges is $\le q$. Hence we do $O(q + n)$ probes at any internal level of $T$. So for a total of $O(\log n)$ internal levels in $T$ the number of probes done is $((q + n)\log n)$ in the forward partition step. If $P$ is a leaf node then we probe all edges in $G[P]$. There are at most ${n_P \choose 2} - q_P$ edges in $G[P]$. Since $P$ is a leaf node, according equation 1, $n_P^2 < 200q_P$. Hence we make ${n_P \choose 2} - q_P = O(q_P)$ probes. Summing this over all the leaves gives a total of $O(q)$ probes. Hence the total probe complexity during the forward step is $O((q+n)\log n)$.

Now we look at the merging step. Merging happens only at the internal nodes. Lets look at an  arbitrary internal level of $T$. At each node $P$ of this level we probe all the edges in $E(B_P, U_P\cup L_P\cup m^*_P)$ and in $ G[B_P]$. Note that we do not have to make any probes between $U$ and $L$ as they were already separated by the approximate median $m^*_P$. Hence the total number of probes made in this node is $\le (|U_P| + |L_P| + |B_P| + 1)|B_P| \le (n_p)(4q_P/n_p) \le 4q_p$. Summing over all the nodes at any given level gives us $O(q)$ as the probe complexity per level. So the total probe complexity in the merging stage is $O(q\log{n})$. Hence, combining the probes made during the partition step and the merge step we see that the total probes needed to sort $V$ is $O((q+n)\log{n})$. 

\subsubsection{Total Complexity:} Now we look at the total complexity of the previous procedure. Again the analysis is divided into forward step and the merge step. In the forward step at each node $P$ we perform  $O(n_p^2)$ operations. This includes computing the degrees, finding the cliques, computing the approximate median. So at any level of $T$, regardless of it being an internal level or not, we perform $O(n^2)$ operations. Hence it totals to $O(n^2\log n)$ operations in the forward step. However this is a conservative estimate and we can remove the $\log n$ factor as argued below: we can define the recurrence for the forward computation as,
\begin{align}
  T(n)= \begin{cases} 
      T(n/40) + T(39n/40) + O(n^2) & n^2 \ge 200q \\
      O(q)\ \mbox{Otherwise}
   \end{cases}
\end{align} 

This follows from the previous discussion. If we don't recurse on a node we guarantee that $n_p^2 < 200 q_p$ for that node. Hence, we have $T(n) = O(n^2 + q)$ using the Akra-Bazzi method\cite{22}. In the merge step,  we only make $O(q_p)$ comparisons at any given node. We compute transitive closures only at the leaves. However for  any leaf $P$ we have $n_P^2 < 200q_P$. Hence computing the transitive closure of $G[P]$ takes $O(q_P^{\omega/2})$ time. Hence, the total complexity of the above procedure is $O(n^2 + q^{\omega/2})$. We summarize the results in this section with the following theorem:

\begin{mythm}
   Given a graph $G(V,E)$ of $n$ vertices having $q$ forbidden edges, one can compute the partial order of $V$ with $O((q+n)\log {n})$ comparisons and in total $O(n^2 + q^{\omega/2})$ time.
\end{mythm}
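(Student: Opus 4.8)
The plan is to assemble Theorem 1 from the three pieces that the preceding subsections establish: the structural claims (existence of cliques, existence of an approximate median), the divide-and-conquer recursion on the comparison graph, and the bookkeeping for probes versus total time. First I would state precisely the invariant maintained at each node $P$ of the recursion tree $T$: either $n_P^2 \ge 200 q_P$, in which case Claim 2 and Claim 3 guarantee that the approximate median $m_P^*$ computed from the family of cliques $\{X_i\}$ splits $P$ into an upper set, a lower set, and a bad set $B_P$ with $|B_P| \le 4 q_P / n_P$ and both $|U(m_P^*,P)|, |L(m_P^*,P)| \ge n_P/40$; or $n_P^2 < 200 q_P$, in which case $P$ is a leaf and we simply probe every edge of $G[P]$. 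The key point is that the constant-proportion split in the recursive case forces the depth of $T$ to be $O(\log n)$, and that at every fixed level of $T$ the node vertex-sets are disjoint (so their sizes sum to $\le n$) and likewise the forbidden-edge sets are disjoint (so their counts sum to $\le q$).

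Next I would carry out the probe accounting in two halves, exactly mirroring the ``Probe Complexity'' paragraph above. For the forward partition step: at an internal node $P$ the approximate-median computation costs $\Theta(q_P + n_P)$ probes (computing $O(q_P/n_P + 1)$ medians of cliques each of size $O(n_P)$, plus one approximate rank computation against the at most $n_P$-sized set $Y$), so summing over a level gives $O(q + n)$ and over all $O(\log n)$ levels gives $O((q+n)\log n)$; at a leaf $P$ we probe at most $\binom{n_P}{2} - q_P = O(q_P)$ edges because $n_P^2 < 200 q_P$ there, and disjointness makes these sum to $O(q)$. For the merge step: at each internal node $P$ we probe only the edges in $E(B_P, U_P \cup L_P \cup \{m_P^*\})$ together with those in $G[B_P]$ — no probes between $U_P$ and $L_P$ since $m_P^*$ already separated them — which is at most $(n_P+1)\cdot |B_P| \le 4 q_P$; summing per level gives $O(q)$ and over all levels $O(q\log n)$. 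Adding the two halves yields the claimed $O((q+n)\log n)$ probe bound.

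For the total time I would invoke the recurrence $T(n) = T(n/40) + T(39n/40) + O(n^2)$ for $n^2 \ge 200 q$ and $T(n) = O(q)$ otherwise, justified by: $O(n_P^2)$ work per node for degree computations, clique construction via Claim 2, and the approximate-median selection; then solve it by the Akra--Bazzi method to get $O(n^2 + q)$ for the forward step. For the merge step the per-node work is $O(q_P)$ except that at each leaf we must compute a transitive closure of $G[P]$, which is boolean matrix multiplication on an $n_P$-vertex graph and hence $O(n_P^\omega) = O(q_P^{\omega/2})$ since $n_P^2 < 200 q_P$; summing over leaves and using that $\sum q_P \le q$ together with superadditivity of $x \mapsto x^{\omega/2}$ for $\omega \ge 2$ gives $O(q^{\omega/2})$. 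Combining, the total complexity is $O(n^2 + q^{\omega/2})$, which completes the theorem.

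The main obstacle, and the step deserving the most care, is verifying that the approximate-median machinery genuinely applies at \emph{every} recursive node and not merely at the root: one must check that the hypotheses of Claims 2 and 3 — in particular the threshold $n_P^2 \ge 200 q_P$, the bound $n(v) \le 4 q_P / n_P$ defining $S$ inside $G[P]$, and the lower bound $|X| \ge (9/50) n_P$ on the union of cliques — are all \emph{preserved} by the recursion, so that Case 1 nodes really do admit a constant-proportion split. The other delicate point is the superadditivity argument for the leaf transitive-closure costs: since $\omega \ge 2$, for nonnegative $a_i$ with $\sum a_i \le q$ we have $\sum a_i^{\omega/2} \le (\sum a_i)^{\omega/2} \le q^{\omega/2}$, which is what lets the leaf contributions collapse to a single $q^{\omega/2}$ term rather than picking up a factor depending on the number of leaves.
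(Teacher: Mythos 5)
Your proposal is correct and follows essentially the same route as the paper: the theorem is a summary of the preceding ``Probe Complexity'' and ``Total Complexity'' discussions, and you reproduce that reasoning faithfully, including the recursion-tree invariant, the per-level disjointness bound, the Akra--Bazzi solution of the forward-step recurrence, and the leaf transitive-closure cost. Your explicit superadditivity observation ($\sum q_P^{\omega/2} \le q^{\omega/2}$ since $\omega/2 \ge 1$ and the leaves' forbidden-edge sets are disjoint) is a small but useful clarification that the paper leaves implicit.
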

\begin{proof}
   Follows from the discussions in this section.
\end{proof}
%
%
%
 
\section{A Randomized Algorithm}
In this section we look at a more direct way of sorting by making random probes. The proposed method is inspired by the literature on two-step oblivious parallel sorting \cite{10,11} algorithms, in particular on a series of studies by Bollob\'{a}s and Brightwell showing certain sparse graphs can be used to construct efficient sorting networks \cite{3,12}. It was shown that if a graph satisfies certain properties then probing its edges and taking the transitive closure of the resulting set would yield large number of relations. Then we just probe the remaining edges that are not oriented, which is guaranteed (with high probability) to be a ``small'' set.

The main idea is as follows: Let $\mathcal{H}_n$ be a collection of undirected graphs on $n$ vertices having certain properties. A transitive orientation of a graph $H(V, E) \in \mathcal{H}_n$ is an ordering of $V$ and the induced orientation of the edges of $H$ based on that ordering. Let $\sigma$ be an ordering on $V$ and $P(H, \sigma)$ be the partial order generated by this ordering $\sigma$ on $H$. It is a partial order since $H$ may not be sortable. Let $\mathcal{P} = P(H, \sigma)$ and $t(P)$ be the number of incomparable pairs in $\mathcal{P}$. We want $H$ to be such that $t(p)$ is small. If that is the case then $\mathcal{P}$ will have many relations and if $H$ is sparse then we can probe all the edges of $H$ and afterwards we will be left with probing only a small number of pairs. These are pairs which were not oriented during the first round of probing and after the transitive closure computation. A graph $H$ is \textit{useful} to our purpose if every transitive orientation of $H$ results in many relations. We want to find a collection $\mathcal{H}_n$ such that every graph in it is useful with high probability. 

We extend the results in \cite{3,12} to show that a collection of certain conditional random graphs are useful, with high probability. In our case this random graph will be a spanning subgraph of the input graph $G$. Here we recall an important result  from \cite{3} which we will use in our  proof.

\begin{mythm}[Theorem 7 in \cite{3}]
   If  $G$ is any graph on $n$ vertices and $G$  satisfies the following property:
   \begin{enumerate}
      \item[Q1:] Any two subsets $A, B$ of vertices having size $l$ have at least one edge between them.
   \end{enumerate}
   Then, the number of incomparable pairs in $P(G, \sigma)$ is at most $O(nl\log{l})$ for any $\sigma$.
\end{mythm}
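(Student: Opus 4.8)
Fix the linear order $\sigma$ and identify $V$ with $\{1,\dots,n\}$ so that $i<j$ means $i$ precedes $j$ in $\sigma$; orient every edge $\{i,j\}$ with $i<j$ as $i\to j$, so that $i<_P j$ holds exactly when $G$ has an increasing path from $i$ to $j$, and the incomparable pairs of $P(G,\sigma)$ are precisely the pairs joined by no such path. Thus the statement is a bound on the number of edges of the incomparability graph $H$ of $P(G,\sigma)$. Two features of $H$ are available. First, $H$ is a subgraph of the complement $\bar G$, and property Q1 says exactly that $\bar G$ contains no complete bipartite subgraph with both sides of size $l$; hence $H$ is $K_{l,l}$-free. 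Second, reading $V$ in $\sigma$-order, $H$ is umbrella-free: if $\{i,k\}\in E(H)$ and $i<j<k$ then $j$ is $H$-adjacent to $i$ or to $k$ (this is transitivity of $P$, read contrapositively). A useful consequence of Q1 is that along any cut the ``non-edge'' bipartite graph between the two sides is $K_{l,l}$-free, and combined with umbrella-freeness this will limit how much incomparability can straddle the cut. Note also that the problem is self-similar: increasing paths between vertices of an interval $[a,b]$ stay inside $[a,b]$, so $P$ restricted to $[a,b]$ equals $P(G[\{a,\dots,b\}],\sigma)$ and $G[\{a,\dots,b\}]$ still satisfies Q1.

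A convenient first bound comes from the width. Since every antichain of $P$ is an independent set of $G$ and, by Q1, $G$ has no independent set of size $2l$, the width of $P$ is at most $2l-1$; fix a chain cover $C_1,\dots,C_r$ with $r\le 2l-1$. Between two chains the comparability relation is a monotone staircase (again by transitivity), so the incomparable pairs between $C_a$ and $C_b$ form the complement of a staircase inside a $|C_a|\times|C_b|$ grid; $K_{l,l}$-freeness forbids an $l\times l$ all-incomparable sub-grid, and a Young-diagram-shaped region with no $l\times l$ box has $O(l(|C_a|+|C_b|))$ cells. Summing over the $\binom{r}{2}$ chain pairs yields $t(P(G,\sigma))=O(nl^2)$. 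Alternatively one can recurse on $\sigma$: splitting at the median position gives $T(n)\le 2T(n/2)+X(n)$ where $X(n)$ counts incomparable pairs straddling the cut, and bounding $X(n)$ by combining umbrella-freeness (which forces one endpoint of each straddling pair to be incomparable to the split vertex) with Q1 gives $T(n)=O(nl\log n)$.

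Getting the stated $O(nl\log l)$ is the delicate point, and I would regard it as the main obstacle: neither crude argument is tight, because each forgets that $K_{l,l}$-freeness is a \emph{global} constraint rather than a per-chain-pair or per-scale one. The plan would be to run the recursion on $\sigma$ but to charge its cross terms carefully: a balanced cut contributing $\Omega(nl)$ straddling incomparable pairs must, through the percolation of reachability forced by Q1, involve a set of roughly $l$ vertices on one side that are jointly incomparable to many vertices on the other side, and only $O(\log l)$ essentially different scales can host such a configuration before one is forced to exhibit a forbidden $K_{l,l}$. Making this charging precise --- quantifying ``essentially different'' and handling the interaction between the staircase/umbrella structure of co-comparability graphs and the forbidden complete bipartite subgraph --- is exactly the work that upgrades the easy $O(nl^2)$ (or $O(nl\log n)$) bound to $O(nl\log l)$; the remaining summations are routine and reuse the self-similarity on sub-intervals verbatim.
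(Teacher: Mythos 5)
The paper does not prove this theorem; it is imported wholesale as Theorem~7 of Bollob\'{a}s and Brightwell~\cite{3} and used as a black box, so there is no internal proof to set yours against, and your proposal must stand on its own. On its own it is, as you acknowledge, incomplete: the two arguments you carry through yield $O(nl^2)$ (chain cover of size $<l$ plus staircase counting) and $O(nl\log n)$ (balanced recursion on $\sigma$), and the paragraph meant to upgrade these to $O(nl\log l)$ describes the obstacle rather than a way past it. The structural ingredients you isolate are correct and well chosen --- $H\subseteq\bar G$, Q1 makes $\bar G$ both $K_l$-free and $K_{l,l}$-free so the width is $<l$, $H$ is umbrella-free in $\sigma$-order, and the problem restricts cleanly to $\sigma$-intervals --- but nothing in the writeup actually extracts a $\log l$.

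There is also an unargued step in the part you do present. The recursion $T(n)\le 2T(n/2)+X(n)$ with $X(n)=O(nl)$ tacitly needs the split vertex $m$ to have only $O(l)$ incomparable partners, because umbrella-freeness charges each straddling incomparable pair to one of $m$'s incomparability neighbors. But Q1 controls the independence number of $G$ and forbids a $K_{l,l}$ in $\bar G$; it says nothing about the $\bar G$-degree of a single vertex, let alone the size of that vertex's incomparability set in the poset. So the $O(nl\log n)$ claim itself needs more justification than the sketch supplies. Net: the structural setup is good and the honesty about the missing step is appropriate, but the quantitative heart of the theorem --- the source of the $\log l$ --- is absent, and that is exactly the content one would need to supply to call this a proof.
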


\noindent The input graph $G$ is chosen by our adversary. However, we show that any random spanning subgraph of $G$ with an appropriate  edge probability will satisfy Q1 with high probability. Let $H_{n,p}(G)$ be a random spanning subgraph of $G$, where $H_{n,p}(G)$ has the same vertex set as $G$ and a pair of vertices in $H_{n,p}(G)$ has an edge between them with probability $p$ if they are adjacent in $G$, otherwise they are also non-adjacent in $H_{n,p}(G)$. All we need to prove is that any random spanning subgraph $H_{n,p}(G)$ given $G$ with $n$-vertices and edge probability $p$ will satisfy Q1 with high probability. Since $G$ has at most $q$ forbidden edges any two subsets of vertices $A, B$ (not necessarily distinct) of size $l$ must have at least ${l \choose 2} - q$ edge between them. Let,  $E_{AB}$  be the event that the pair $(A,B)$ is bad (they have no edges between them), then the probability $S_{n,p}$ that there exists a bad pair is:

\begin{align}
   S_{n,p} := \mathbb{P}(\sum_{i,j}{E_{A_iB_j}}) \le \sum_{i,j}\mathbb{P}{(E_{A_iB_j})}  \le \sum_{i,j}{(1 - p)^{e(A_i,B_j)}}
\end{align}

\noindent where the sum is taken over all such ${n \choose l}^2$ pairs of subsets, and the number of edges between the two sets $A$ and $B$ in $G$ is $e(A, B) \ge {l \choose 2} - q$. So we have,
\begin{align}
   \nonumber S_{n,p} &\le {n \choose l}^2(1 - p)^{{l \choose 2} - q} \le {n \choose l}^2e^{-p({l \choose 2} - q)}\ \ \mbox{Since, $e^{-x} \ge 1- x$}\\
   \nonumber &\le  \left({{en} \over l}\right)^{2l}e^{-p({l \choose 2} - q)} \le \exp(2l(\log{en/l}) - p({l \choose 2} - q))
\end{align} 

\noindent Hence $S_{n,p} \to 0$ as $n \to \infty$ whenever $\exp(2l(\log{en/l}) - p({l \choose 2} - q)) = o(1)$. Given $q < {n \choose 2}$ it is always possible to find appropriate values for $p$ and $l$ as functions of $q$ and $ n$ such that $S_{n,p} = o(1)$. Given some value for the pair $(p, l)$, we see that in the first round we make $O(pn^2)$ probes with high probability and in the second round $O(nl\log{l})$ probes again with high probability. So the total probe complexity is $\widetilde{O}(pn^2 + nl)$. With some further algebra it can be shown that this is $\widetilde{O}(n^2/\sqrt{q + n} + n\sqrt{q})$. We summarize this section with the following theorem:

\begin{mythm}
   Given a graph $G$ on $n$ vertices and $q$ forbidden edges one can determine the partial order on $G$ with high probability in two steps by probing only $\widetilde{O}(n^2/\sqrt{q + n} + n\sqrt{q})$ edges in total and in $O(n^\omega)$ time.
   \end{mythm}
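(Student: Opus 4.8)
The plan is to realise the two-round scheme sketched above with a single, carefully tuned pair of parameters $(p,l)$ written as functions of $n$ and $q$. Let $H=H_{n,p}(G)$ be the random spanning subgraph of $G$ with retention probability $p$. In round one I would probe every edge of $H$, orient each by the unknown true order $\sigma$, and take the transitive closure; in round two I would probe every pair of $G$ that is still incomparable and take the transitive closure once more. Since every edge of $G$ is then oriented (directly, or by transitivity already in round one), the output is exactly the desired partial order $P(G,\sigma)$, and the running time is dominated by the two Boolean matrix multiplications, i.e.\ $O(n^\omega)$.

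The crux of the probe analysis is to bound the number of round-two probes. First I would verify that $H$ satisfies property Q1 for the chosen $l$ with high probability: this is precisely the union bound already displayed, since any two $l$-subsets $A,B$ span at least $\binom{l}{2}-q$ edges of $G$, a bad pair survives in $H$ with probability at most $(1-p)^{\binom{l}{2}-q}$, and summing over the $\binom{n}{l}^2$ choices gives $S_{n,p}\le\exp\bigl(2l\log(en/l)-p(\binom{l}{2}-q)\bigr)$. Conditioned on Q1, Theorem~2 says $P(H,\sigma)$ has at most $O(nl\log l)$ incomparable pairs, so the number of edges of $G$ left unoriented after round one is $O(nl\log l)=\widetilde{O}(nl)$ with high probability. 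Round one itself costs $O(pn^2)$ probes with high probability, because the number of edges of $H$ is binomial with mean at most $pn^2/2$ and concentrates (and the bound is trivially negligible if $pn^2$ is small). Hence the total probe count is $\widetilde{O}\bigl(pn^2+nl\bigr)$ w.h.p., and it remains only to optimise $(p,l)$ subject to the feasibility constraint $2l\log(en/l)-p(\binom{l}{2}-q)\to-\infty$.

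For the optimisation I would take $l=\Theta(\sqrt{q+n})$, which makes $\binom{l}{2}-q=\Omega(q+n)$ provided the hidden constant is large enough, and then $p=\Theta\bigl(\log(en/l)/\sqrt{q+n}\bigr)$ with a sufficiently large constant, so that the exponent above is $-\Omega(\sqrt{q+n}\,\log n)$ and Q1 holds w.h.p. These choices give round one $\widetilde{O}(n^2/\sqrt{q+n})$ probes and round two $\widetilde{O}(n\sqrt{q+n})=\widetilde{O}(n\sqrt q+n^{3/2})$ probes; since $n^{3/2}=O\bigl(n^2/\sqrt{q+n}+n\sqrt q\bigr)$ for every $q$ (check $q\le n$ and $q>n$ separately), the sum collapses to $\widetilde{O}\bigl(n^2/\sqrt{q+n}+n\sqrt q\bigr)$, the stated bound.

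The part I expect to be the real work is this last step: exhibiting one pair $(p,l)$, expressed uniformly in $n$ and $q$, that simultaneously keeps $\binom{l}{2}-q$ positive and large, drives the union bound of the previous paragraph to $o(1)$, and balances $pn^2$ against $nl$ so that the two terms fuse into the claimed expression across the full range $1\le q<\binom{n}{2}$ (in particular handling $q<n$ and $q\ge n$ uniformly). Everything else is a direct invocation of Theorem~2 together with a Chernoff bound on $|E(H)|$.
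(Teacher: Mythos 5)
Your proposal is correct and follows essentially the same route as the paper: probe a random spanning subgraph $H_{n,p}(G)$, invoke Theorem~2 via property Q1 established by the union bound $S_{n,p}\le\exp\bigl(2l\log(en/l)-p\bigl(\tbinom{l}{2}-q\bigr)\bigr)$, and then probe the remaining incomparable pairs. The one place where you go beyond the paper's text is welcome: the paper simply asserts ``with some further algebra'' that $\widetilde O(pn^2+nl)$ becomes $\widetilde O\bigl(n^2/\sqrt{q+n}+n\sqrt q\bigr)$, whereas you explicitly exhibit the parameter choice $l=\Theta(\sqrt{q+n})$ and $p=\Theta\bigl(\log(en/l)/\sqrt{q+n}\bigr)$, verify $\tbinom{l}{2}-q=\Omega(q+n)$ for a large enough constant, and then absorb the spurious $n^{3/2}$ term by a case split on $q\lessgtr n$. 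That is precisely the missing algebra, so your writeup is a faithful completion of the paper's argument rather than a different one.
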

\begin{proof}
   Follows from the preceding discussions.
\end{proof}

\subsection{When $G$ is a Random Graph}
The above technique can easily be extended for the case when the input graph is random. Let $G_{n,p}$ be the input graph having $n$-vertices and an uniform edge probability $p$. For such a graph we can use equation (3) to bound $S_{n,p}$ as follows:
 
\begin{align}
   \nonumber S_{n,p} &\le {n \choose l}^2(1 - p)^{l^2} \le \exp(-pl^2 + 2l\log{n}) 
\end{align} 

Hence, we can choose any $l > 2\log{n}/p$ such that $S_{n,p} \to 0$ as $n \to \infty$. Let $l = 3\log{n}/p$. Using Theorem 2 we have $t(G_{n,p}) = \widetilde{O}(nl) = \widetilde{O}(n/p)$. Since $G_{n,p}$ has $pn^2/2$ edges (with high probability) the critical value of $p$ when $t(G_{n,p}) = pn^2/2$ is $\widetilde{O}(1/\sqrt{n})$. Let this be $\hat{p}$. Hence if $p > \hat{p}$, we can sort by making only $\widetilde{O}(n^{3/2})$ comparisons. Since given $G_{n,p}$ we can construct an induced subgraph $G_{n,\hat{p}}$ and use it as the random graph in our previous construction. Otherwise we just probe all the edges which makes $O(pn^2)$ comparisons. Thus we can sort $G_{n,p}$ with at most $\widetilde{O}(\min{(n^{3/2}, pn^2)})$ comparisons with high probability. Hence, we get an elementary technique to sort a random graph with at most $\widetilde{O}(n^{3/2})$ comparisons. The algorithm in \cite{1} has a slightly better bound of $\widetilde{O}(n^{7 / 5})$ comparisons. However, the total runtime of the algorithm in \cite{1} is only polynomially bounded when $p$ is small. In our algorithm we need compute the transitive closure only twice making it run in $O(n^\omega)$ total time.

%
%
%
\section*{Concluding Remarks}
In this paper we study the problem of sorting under non-uniform comparison costs, where costs are either 1 or $\infty$. This cost structure is non-monotone resulting in additional complexity. The results presented here only uses elementary techniques, yet achieving non-trivial bounds on probe complexity. Further, we present strong evidence that the complexity of sorting $V$ is dependent on certain properties of the input graph, in particular the number of forbidden edges $q$. We derive an non-trivial upper bound $O((q + n)\log {n})$ for the probe complexity. The total complexity of our algorithm is bounded by $O(n^2 + q^{\omega/2})$. Since the lower bound for the total complexity of the problem is $\Omega(n^2)$, module fast matrix multiplication, the proposed algorithm is almost optimal in terms of the total complexity. We also present a randomized algorithm for the problem which uses $\widetilde{O}(n^2/\sqrt{q + n} + n\sqrt{q})$ probes with high probability. When the input graph is random this algorithm requires only $\widetilde{O}(n^{3/2})$ probes again with high probability. 
%
%
%

%
%
%

\section*{Appendix}
\subsubsection{Proof of Lemma 1:\\}

Let, $\{f_1(n),f_2(n),...,f_k(n) \}$ be a finite set of non-negative monotonically increasing functions in $ n $ such that:
	\begin{enumerate}
		\item $ \forall i\ f_i(n) \in o(g(n)) $ 
		\item $ \sum_{i}{f_i(n)} \le cg(n)$
	\end{enumerate}
	
If $ F(n) = \sum_{i}{f_i^2(n)}$	Then $F(n) \in o(g^2(n))$.
\begin{proof}
First we prove $F(n) = O(g(n))$. Clearly,
\begin{align}
   \nonumber (\sum_{i}{f_i(n)})^2 \le c^2g^2(n) \\
   \nonumber \sum_{i}{f_i^2(n)} + 2\sum_{i,j}f_i(n)f_j(n) \le c^2g^2(n) \\
   \nonumber F(n) \le c^2g^2(n)
\end{align}

  Now we prove $F(n) \not = \Omega(g^2(n))$: assume that $F(n) \in \Omega(g^2(n))$, then there exists some constant $\hat{c}$ such that, $F(n) \ge \hat{c}g^2(n)$ whenever $n \ge n_1$. Now let $f_i(n) \le c_ig(n)$ whenever $n \ge n_0(c_i)$. Since, $f_i(n) \in o(g(n))$ we can pick this $c_i$'s arbitrarily and independent of each other. Now, for $n \ge \max(n_1, n_2)$ (where $n_2 = \max_i(n_0(c_i))$) we have,
  
  \begin{align}
    \nonumber  \sum_{i}{f_i^2(n)} \ge \hat{c}g^2(n)\\
    \nonumber \sum_{i}{c_i^2} \ge \hat{c}
  \end{align}
This contradicts the fact that $c_i$'s can be assigned arbitrary values independent of each other. That is, not all $f_i(n)$ will satisfy the condition $f_i(n) \in o(g(n))$ simultaneously. Hence,  $F(n) \not = \Omega(g^2(n))$.

\end{proof}

\subsubsection{Proof of Lemma 2:\\}

Let, $T(n) = \sum_{i=1}^{k}{T(n_i)} + f(n)$ where $\sum_i{n_i} \le \delta n$ for some $0 < \delta < 1$ and $f(n) \in o(n^2)$. Then, $T(n) \in o(n^2)$.

\begin{proof}
Let as assume $T(n) = \Omega(n^{\alpha})$ for some $\alpha \ge 1$. Otherwise we are done. Hence, $\sum_{i=1}^{k}{T(n_i)} \le T(\sum_{i=1}^{k}{n_i}) = T(\delta n)$. So, the recurrence becomes,  $T(n) \le T(\delta n) + f(n)$. Using Master theorem we see that the case 3 applies here, which gives, $T(n) = \Theta(f(n)) = o(n^2)$.
\end{proof}
\end{document}